\newcommand{\insV}{\mathsf{insV}}
\newcommand{\insE}{\mathsf{insE}}
\newcommand{\insP}{\mathsf{insP}}
\newcommand{\delV}{\mathsf{delV}}
\newcommand{\delE}{\mathsf{delE}}
\newcommand{\delP}{\mathsf{delP}}
\newcommand{\updP}{\mathsf{updP}}
\newcommand{\grantnum}[2]{#2\xspace}
\newcommand{\refapp}[1]{Appendix~\ref{#1}}
\begin{document}

\title{Flexible graph matching and graph edit distance using answer set programming}
\author{Sheung Chi Chan\inst{1}
\and James Cheney\inst{1,2}}

\institute{University of Edinburgh, UK
\and
The Alan Turing Institute}
\maketitle

\begin{abstract}
  The \emph{graph isomorphism}, \emph{subgraph isomorphism}, and
  \emph{graph edit distance} problems are combinatorial problems with
  many applications. Heuristic exact and approximate algorithms for
  each of these problems have been developed for different kinds of
  graphs: directed, undirected, labeled, etc.  However, additional
  work is often needed to adapt such algorithms to different classes
  of graphs, for example to accommodate both labels and property
  annotations on nodes and edges.  In this paper, we propose an
  approach based on answer set programming.  We show how each of these
  problems can be defined for a general class of \emph{property
    graphs} with directed edges, and labels and key-value properties
  annotating both nodes and edges.  We evaluate this approach on a
  variety of synthetic and realistic graphs, demonstrating that it is
  feasible as a rapid prototyping approach.
\end{abstract}

\section{Introduction}

Graphs are a pervasive and widely applicable data structure in
computer science.  To name just a few examples, graphs can represent
symbolic knowledge structures extracted from Wikipedia~\cite{dbpedia}, provenance
records describing how a computer system executed to produce a result~\cite{pasquier17socc},
or chemical structures in a scientific knowledge base~\cite{muta}.  In many
settings, it is of interest to solve \emph{graph matching} problems,
for example to determine when two graphs have the same
structure, or when one graph appears in another, or to measure how
similar two graphs are.

Given two graphs, possibly with labels or other data associated with
nodes and edges, the \emph{graph isomorphism} problem (GI) asks
whether the two graphs have the same structure, that is, whether there
is an invertible mapping from one graph to another that preserves and
reflects edges and any other constraints.  The \emph{subgraph isomorphism}
problem (SUB) asks whether one graph is isomorphic to a subgraph of
another.  Finally, the \emph{graph edit distance} problem (GED) asks whether
one graph can be transformed into another via a sequence of edit steps,
such as insertion, deletion, or updates to nodes or edges.

These are well-studied problems.  Each is in the class NP, with SUB
and GED being NP-complete~\cite{garey79}, while the lower bound of the complexity 
of GI is an open problem~\cite{arvind05beatcs}.  Approximate and exact
algorithms for graph edit distance, based on
heuristics or on reduction to other NP-complete problems, have been
proposed~\cite{gao2010paa,lerouge17pr,riesen15,chen19kbs}.  Moreover, for special cases such as
database querying, there are algorithms for subgraph isomorphism that
can provide good performance in practice when matching small
query subgraphs against graph databases~\cite{lee12vldb}.

However, there are circumstances in which none of the available
techniques is directly suitable.  For example, many of the algorithms
considered so far assume graphs of a specific form, for example with
unordered edges, or unlabeled nodes and edges.  In contrast, many
typical applications use graphs with complex structure, such as
property graphs: directed multigraphs in which nodes and edges can
both be labeled and annotated with sets of key-value pairs
(\emph{properties}).  Adapting an existing algorithm to deal with each
new kind of graph is nontrivial.  Furthermore, some applications
involve searching for isomorphisms, subgraph isomorphisms, or edit
scripts subject to additional constraints~\cite{zampelli05cp,chan19mw}.

In this paper we advocate the use of \emph{answer set programming}
(ASP) to specify and solve these problems.  Property graphs can be
represented uniformly as sets of logic programming facts, and each of
the graph matching problems we have mentioned can be specified using
ASP in a uniform way.  Concretely, we employ the Clingo ASP solver, but
our approach relies only on standard ASP features.  

For each of the problems we consider, it is clear in principle that it
should be possible to encode using ASP, because ASP subsumes the
NP-complete SAT problem.  Our contribution is to show how to encode each of these
problems directly in a way that produces immediately useful results,
rather than via encoding as SAT or other problems and decoding the
results.  For GI and SUB, the encoding is rather direct and the ASP
specifications can easily be read as declarative specifications of the
respective problems; however, the standard formulation of the graph
edit distance problem is not as easy to translate to a logic program
because it involves searching for an edit script whose maximum length
depends on the input.  Instead, we consider an indirect (but still
natural) approach which searches for a partial matching between the
two graphs that minimizes the edit distance, and derives an edit
script (if needed) from this matching.  The proof of correctness of
this encoding is our main technical contribution.

We provide experimental evidence of the practicality of our
declarative approach, drawing on experience with a nontrivial
application: generalizing and comparing provenance
graphs~\cite{chan19mw}.  In this previous work, we needed to solve two
problems: (1) given two graphs with the same structure but possibly
different property values (e.g. timestamps), identify the general
structure common to all of the graphs, and (2) given a background
graph and a slightly larger foreground graph, match the background
graph to the foreground graph and ``subtract'' it, leaving the
unmatched part.  We showed in ~\cite{chan19mw} that our ASP approach
to approximate graph isomorphism and subgraph isomorphism can solve
these problems fast enough that they were not the bottleneck in the
overall system.  In this paper, we conduct further experimental
evaluation of our approach to graph isomorphism, subgraph isomorphism,
and graph edit distance on synthetic graphs and real graphs used in a
recent Graph Edit Distance Contest (GEDC)~\cite{abuaisheh17prl} and
our recent work~\cite{chan19mw}.

\section{Background}

\paragraph{Property graphs}

We consider \emph{(directed) multigraphs} $G = (V,E,src,tgt,lab)$ where $V$ and $E$
are disjoint sets of \emph{node identifiers} and \emph{edge
  identifiers}, respectively, $src,tgt : E \to V$ are functions
identifying the source and target of each edge, and $lab : V\cup E
\to \Sigma$ is a function assigning each vertex and edge a label from
some set $\Sigma$.  Note that multigraphs can have multiple edges with
the same source and target.  Familiar definitions of ordinary directed
or undirected graphs can be recovered by imposing further constraints,
if desired.  

A \emph{property graph} is a directed multigraph extended with an
additional partial function $prop : (V \cup E) \times \Gamma \rightharpoonup \Delta$ where
$\Gamma$ is a set of \emph{keys} and $\Delta$ is a set of \emph{data
  values}.  For the purposes of this paper we assume that all
identifiers, labels, keys and values are represented as Prolog atoms. 

We consider a partial function with range $X$ to be a total function
with range $X \uplus \{\bot\}$ where $\bot$ is a special token not
appearing in $X$.  We consider $X \uplus \{\bot\}$ to be partially ordered
by the least partial order satisfying $\bot \sqsubseteq x$ for all $x
\in X$.

\paragraph{Isomorphisms} 

A \emph{homomorphism} from property graph $G_1$ to $G_2$ is a function $h
: G_1 \to G_2$ mapping $V_1$ to $V_2$ and $E_1 $ to $E_2$, such that:
\begin{itemize}
\item for all $v \in V_1$, 
$lab_2(h(v)) = lab_1(v)$ and $prop_2(h(v),k) \sqsubseteq prop_1(v,k)$ 
\item for all $e \in E_1$, $lab_2(h(e)) = lab_1(e)$ and $prop_2(h(e),k)
  \sqsubseteq prop_1(e,k)$
\item for all $e \in E_1$, $src_2(h(e)) = h(src_1(e))$ and
  $tgt_2(h(e)) = h(tgt_1(e))$
\end{itemize}
(Essentially, $h$ is a pair of functions
$(V_1 \to V_2) \times (E_1 \to E_2)$, but we abuse notation slightly
here by writing $h$ for both.)  As usual, an isomorphism is an
invertible homomorphism whose inverse is also a homomorphism, and
$G_1$ and $G_2$ are isomorphic ($G_1 \cong G_2$) if an isomorphism
between them exists.  Note that the labels of nodes and edges must
match exactly, that is, we regard labels as integral to nodes and
edges, while properties must match only if defined in $G_1$.

\paragraph{Subgraph isomorphism}

A subgraph $G'$ of $G$ is a property graph satisfying:
\begin{itemize}
\item $V' \subseteq V$ and $E' \subseteq E$ 
\item $src'(e) = src(e) \in V'$ and $tgt(e) =
  tgt'(e) \in V'$ for all $e \in E'$
\item $lab'(x) = lab(x)$ when $x \in V' \cup E'$
\item $prop'(x,k) \sqsubseteq prop(x,k) $ when $x \in V'\cup E'$
\end{itemize}
In other words, the vertex and edge sets of $G'$ are subsets of those
of $G$ that still form a meaningful graph, the labels are the same
as in $G'$, and the properties defined in $G'$ are the same as in $G$
(but some properties in $G$ may be omitted).

We say that $G_1$ is \emph{subgraph isomorphic} to $G_2$ ($G_1 \lesssim
G_2$) if there is a
subgraph of $G_2$ to which $G_1$ is isomorphic.  Equivalently, $G_1
\lesssim G_2$ holds if there is a \emph{injective} homomorphism $h :
G_1 \to G_2$.  If such a homomorphism exists, then it maps $G_1$ to an isomorphic
subgraph of $G_2$, whereas if $G_1 \cong G_2' \subseteq G_2$ then the
isomorphism between $G_1$ and $G_2'$ extends to an injective
homomorphism from $G_1$ to $G_2$.

\paragraph{Graph edit distance}

We consider \emph{edit operations}:
\begin{itemize}
\item insertion of a node ($\insV(v,l)$), edge $(\insE(e,v,w,l)$), or
  property $(\insP(x,k,v,d)$)
\item deletion of a node ($\delV(v)$), edge $(\delE(e)$), or
  property $(\delP(x,k)$)
\item in-place update ($\updP(x,k,d)$) of a property value on a
  given node or edge $x$
  with a  given key $k$ to value $d$
\end{itemize}
The meanings of each of these operations are defined in
table~\ref{tab:edit-semantics}, where we write
$G = (V,E,src,tgt,lab,prop)$ for the graph before the edit and
$G' = (V',E',src',tgt',lab',prop')$ for the updated graph.  Each row
of the table describes how each part of $G'$ is defined in terms of
$G$.  In addition, the edit operations have the following
preconditions: Before an insertion, the inserted node, edge, or
property must not already exist; before a deletion, a deleted node
must not be a source or target of an edge, and a node/edge must not
have any properties; before an update, the updated property must
already exist on the affected node or edge.  If these preconditions
are not satisfied, the edit operation is not allowed on $G$.

\begin{table}[tb]
\caption{Edit operation semantics}
\[
\begin{array}{|c|c|c|c|c|c|c|}
\hline
  op & V' & E' & src' & tgt' & lbl' & prop'\\
\hline
\insV(n,l) 
& V \uplus \{v\} & E & src & tgt & lbl [v:=l]  & prop\\
\insE(e,v,w,l) 
& V & E \uplus\{e\} & src[e:=v] & tgt[e:=w] & lbl[e:=l] & prop\\
\insP(x,k,d) 
& V & E & src & tgt & lbl & prop[x,k:=d]\\
\delV(v) 
& V - \{v\} & E & src & tgt & lbl [v:=\bot]  & prop\\
\delE(e) 
& V & E-\{e\} & src[e:=\bot] & tgt[e:=\bot] & lbl[e:=\bot] & prop\\
\delP(x,k) 
& V & E & src & tgt & lbl & prop[x,k:=\bot]\\
\updP(x,k,d) & V & E & src & tgt & lbl & prop[x,k:=d]\\
\hline
\end{array}
\]
\label{tab:edit-semantics}
\end{table}
We write $op(G)$
for the result of $op$ acting on $G$.  More generally, if $ops$ is a list of
operations then we write $ops(G)$ for the result of applying the
operations to $G$.
Given graphs $G_1,G_2$ we define the \emph{graph edit distance}
between $G_1$ and $G_2$ as $GED(G_1,G_2)  = \min\{ |ops| \mid {ops(G_1) = G_2}\}$, that is, the
shortest length of an edit script modifying $G_1$ to $G_2$.

Computing the graph edit distance between two graphs (even without
labels or properties) is an NP-complete problem.  Moreover, we
consider a particular setting where the edit operations all have equal
cost, but in general different weights can be assigned to different
edit operations.  We can consider a slight generalization as follows:
Given a weighting function $w$ mapping edit operations to positive
rational numbers, the \emph{weighted graph edit distance} between $G_1$ and $G_2$
is $wGED(G_1,G_2) = \min\{ \sum_{op\in ops}w(op) \mid {ops(G_1) = G_2}\}$.  The
unweighted graph edit distance is a special case so this
problem is also NP-complete.

\paragraph{Answer set programming}
We assume familiarity with general logic programming concepts
(e.g. familiarity with Prolog or Datalog).  To help make the paper
accessible to readers not already familiar with answer set
programming, we illustrate some programming techniques that differ
from standard logic programming via a short example: coloring the
nodes of an undirected graph with the minimum number of colors.  Graph
3-coloring is a standard example of ASP, but we will adopt a slightly
nonstandard approach to illustrate some key techniques we will rely on
later.  We will use the concrete syntax of the Clingo ASP solver,
which is part of the Potassco
framework~\cite{gebser2011aicom,gebser2018ki}.  Examples given here
and elsewhere in the paper can be run verbatim using the Clingo
interactive online demo\footnote{https://potassco.org/clingo/run/}.

\begin{figure}[tb]
  \centering
\begin{tabular}{cp{5cm}}
  $\vcenter{\hbox{\includegraphics[scale=0.3]{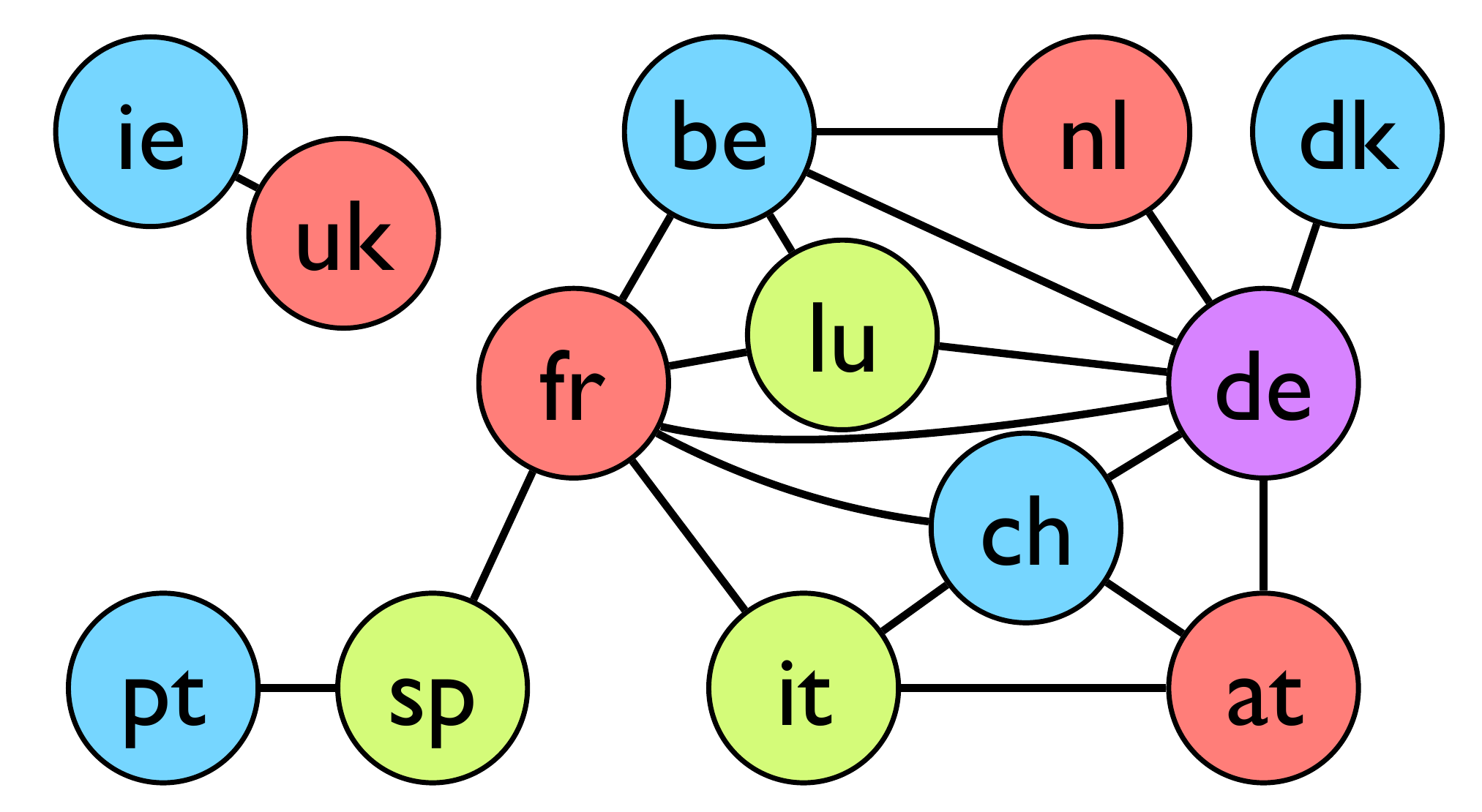}}}$
& \texttt{e(uk,ie). e(fr,(sp;de;ch;it;be;lu)). 
e(sp,pt). 
e(it,at).
e(be,(lu;nl)). e(ch,(it;at)).  e(de,(be;nl;at;dk;ch;lu)).}
\end{tabular}
  \caption{Graph coloring example}
  \label{fig:coloring}
 \lstset{frame=single,numbers=left,caption={Graph
      3-coloring},label={fig:threecol}}
\begin{lstlisting}
e(X,Y) :- e(Y,X).
n(X) :- e(X,_).
color(1..3).
{c(X,Y) : color(Y)} = 1 :- n(X).
:- e(X,Y), c(X,C), c(Y,D), not C <> D.
\end{lstlisting}
\lstset{frame=single,numbers=left,caption={Minimal $k$-coloring (extending Listing~\ref{fig:threecol})},label={fig:minkcol}}
\begin{lstlisting}
color(X) :- n(X).
cost(C,1) :- c(_,C).
#minimize { Cost,C : cost(C,Cost) }.
\end{lstlisting}
\end{figure}
Figure~\ref{fig:coloring} shows an example graph where edge
relationships correspond to land borders between some countries.  The
edges are defined using an association list notation; for example
\lstinline{e(be,(lu;nl))} abbreviates two edges \lstinline{e(be,lu)}
and \lstinline{e(be,nl)}.  Listing~\ref{fig:threecol} defines graph
3-coloring declaratively.  The first
line states that the edge relation is symmetric and the second defines
the node relation to consist of all sources (and by symmetry targets)
of edges.  Line 3 defines a relation \lstinline{color/1} to hold for
values 1,2,3.
Lines 4--5 define when a graph is 3-colorable, by defining when a
relation \lstinline{c/2} is a valid 3-coloring.  Line 4 says that
\lstinline{c/2} represents a (total) function from nodes to colors,
i.e. for every node there is exactly one associated color.  Line 5
says that for each edge, the associated colors of the source and
target must be different.  Here, we are using the \lstinline{not}
operator solely to illustrate its use, but we could have done without
it, writing \lstinline{C = D} instead.

Listing~\ref{fig:threecol} is a complete
program that can be used with 
Figure~\ref{fig:coloring} to determine that the example graph is not
3-colorable. 
What if we want to find the least $k$ such that a graph is
$k$-colorable?  We cannot leave the number of colors undefined, since
ASP requires a finite search space, but we could manually change the `3' on line 5
to various values of $k$, starting with the maximum $k = |V|$ and
decreasing until the minimum possible $k$ is found.


Instead, using \emph{minimization constraints}, we can modify the
3-coloring program above to instead compute a minimal $k$-coloring
(that is, find a coloring minimizing the number of colors) purely
declaratively by adding the clauses shown in
Listing~\ref{fig:minkcol}. Line 1 defines the set of colors simply to
be the set of node identifiers (plus the three colors we already had,
but this is harmless).  Line 2 associates a cost of 1 with each used
color.  Finally, line 3 imposes a minimization constraint:to minimize
the sum of the costs of the colors.
Thus, using a single Clingo specification we can automatically find
the minimum number of colors needed for this (or any) undirected
graph.  The 4-coloring shown in Figure~\ref{fig:coloring} was found
this way.

\section{Specifying graph matching and edit distance}

In this section we give ASP specifications defining each problem.  We
first consider how to represent graphs as flat collections of facts,
suitable for use in a logic programming setting.  We choose one among
several reasonable representations: given $G = (V,E,src,tgt,lab,prop)$
and given three predicate names $\mathtt{n},\mathtt{e},\mathtt{p}$ we define the following
relations:
\begin{eqnarray*}
  Rel_G(\mathtt{n},\mathtt{e},\mathtt{p}) &=& \{\mathtt{n}(v,lab(v)) \mid v \in V\}\\
&\cup& \{\mathtt{e}(e,src(e),tgt(e),lab(e)) \mid e \in E\}\\
&\cup& \{\mathtt{p}(x,k,d) \mid x \in V\cup E, prop(x,k) = d \neq \bot\}
\end{eqnarray*}
Clearly, we can recover the original graph from this representation. 

In the following problem specifications, we always consider two
graphs, say $G_1$ and $G_2$, and to avoid confusion between them we
use two sets of relation names to encode them, thus
$Rel_{G_1}(\mathtt{n}_1,\mathtt{e}_1,\mathtt{p}_1) \cup
Rel_{G_2}(\mathtt{n}_2,\mathtt{e}_2,\mathtt{p}_2)$
represents two graphs.  We also assume without loss of generality that
the sets of vertex and edge identifiers of the two graphs are all
disjoint, i.e.  $(V_1 \cup E_1) \cap (V_2 \cup E_2) = \emptyset$,
to avoid any possibility of confusion among them.



We now show how to specify homomorphisms and isomorphisms among
graphs.  The Clingo code in Listing~\ref{fig:gh} defines when a graph
homomorphism exists from $G_1$ to $G_2$.  We refer to this program
extended with suitable representations of $G_1$ and $G_2$ as
$Hom_h(G_1,G_2)$. The binary relation $h$, representing the
homomorphism, is specified using two constraints.  The first says that
$h$ maps nodes of $G_1$ to nodes of $G_2$ with the same label, while
the second additionally specifies that $h$ maps edges of $G_1$ to
those of $G_2$ preserving source, target, and label.  Notice in
particular that the cardinality constraint ensures that $h$ represents
a total function with range $V_1 \cup E_1$, so in any model satisfying
the first clause, every node in $G_1$ is matched to one in $G_2$,
which means that the body of the second clause is satisfiable for each
edge.  The third clause simply constrains $h$ so that any properties
of nodes or edges in $G_1$ must be present on the matching node or
edge in $G_2$.

\begin{figure}[t]
\lstset{frame=single,numbers=left,caption={Graph homomorphism},label={fig:gh}}
\begin{lstlisting}
{h(X,Y) : n2(Y,L)} = 1 :- n1(X,L).
{h(X,Y) : e2(Y,S2,T2,L), h(S1,S2), h(T1,T2)} = 1 :- e1(X,S1,T1,L).
:- p1(X,K,D), h(X,Y), not p2(Y,K,D).
\end{lstlisting}
\lstset{frame=single,numbers=left,caption={Graph isomorphism
    (extending Listing~\ref{fig:gh})},label={fig:gi}}
\begin{lstlisting}
{h(X,Y) : n1(X,L)} = 1 :- n2(Y,L).
{h(X,Y) : e1(X,S1,T1,L), h(S1,S2), h(T1,T2)} = 1 :- e2(Y,S2,T2,L).
:- p2(Y,K,D), h(X,Y), not p1(X,K,D).
\end{lstlisting}
\lstset{frame=single,numbers=left,caption={Subgraph isomorphism (extending Listing~\ref{fig:gh})},label={fig:sgi}}
\begin{lstlisting}
{h(X,Y) : n1(X,L)} <= 1 :- n2(Y,L).
{h(X,Y) : e1(X,S1,T1,L), h(S1,S2), h(T1,T2)} <= 1 :- e2(Y,S2,T2,L).
\end{lstlisting}
\end{figure}
Next to define when $h$ is a graph \emph{isomorphism}, we add the
symmetric clauses shown in Listing~\ref{fig:gi}.  We write
$Iso_h(G_1,G_2)$ for the combination of Listings~\ref{fig:gh}
and~\ref{fig:gi}.  Since the two  listings together imply that $h$
represents a homomorphism in the forward direction and simultaneously
represents a homomorphism from $G_2$ to $G_1$ in the backward
direction, these four clauses suffice to specify that $h$ is an isomorphism.

To specify subgraph isomorphism, we simply require that $h$ is an
injective homomorphism from $G_1$ to $G_2$, as shown in
Listing~\ref{fig:sgi}.  We refer to the specification in
Listing~\ref{fig:sgi} as $Sub_h(G_1,G_2)$.  The two additional
constraints specify that the inverse of $h$ is a \emph{partial}
homomorphism.  This is equivalent to $h$ being an injective homomorphism.

Finally we consider the specification of the graph edit distance
problem.  On the surface, this seems challenging, since the graph edit
distance is defined as the length  of a minimal edit script mapping one
graph to another, and there are infinitely many possible edit
scripts.  However, there is clearly always an upper bound $d$ on the
edit distance: consider an edit script that just deletes
$G_1$ and inserts $G_2$, and
take $d$ to be the length of this script.
So, given two graphs and this upper bound $d$ we could proceed by
specifying a search space over edit scripts of bounded length,
defining the meaning of each edit operator, and seeking to minimize
the number of steps necessary to get from $G_1$ to $G_2$.  However,
this encoding seems rather heavyweight, and requires 
preprocessing to determine $d$.

Instead, we follow a different strategy, analogous to the approach
adopted for graph coloring earlier.  The strategy is based on the
observation that the graph edit distance is closely related to the
\emph{maximum subgraph problem}~\cite{bunke97}, that is, given two
graphs $G_1$, $G_2$, find the largest graph that is subgraph
isomorphic to both.  If we identify such a graph then (as we shall
show) we can read off an edit script that maps $G_1$ to $G_2$, which
first deletes unmatched structure from $G_1$, then updates
properties in-place, and finally inserts new structure needed in
$G_2$.  Furthermore, to identify the maximum common subgraph, we do not need
to construct a new graph separate from $G_1$ and $G_2$; instead, we
can think of the maximum common subgraph as an isomorphic pair of
subgraphs of $G_1$ and $G_2$.  So in other words, we will search for a
partial isomorphism $h$ between $G_1$ and $G_2$, use it as a basis for
extracting an edit script, and minimize its cost.

\begin{figure}[tb!]
\lstset{frame=single,numbers=left,caption={Graph edit distance},label={fig:ged}}
\begin{lstlisting}
{h(X,Y) : n2(Y,L)} <= 1 :- n1(X,L).
{h(X,Y) : n1(X,L)} <= 1 :- n2(Y,L).
{h(X,Y) : e2(Y,S2,T2,L), h(S1,S2), h(T1,T2)} <= 1 :- e1(X,S1,T1,L).
{h(X,Y) : e1(X,S1,T1,L), h(S1,S2), h(T1,T2)} <= 1 :- e2(Y,S2,T2,L).

delete_node(X) :- n1(X,_), not h(X,_).
insert_node(Y,L) :- n2(Y,L), not h(_,Y).

delete_edge(X) :- e1(X,_,_,_), not h(X,_).
insert_edge(Y,S,T,L) :- e2(Y,S,T,L), not h(_,Y).

update_prop(X,K,V1,V2) :- p1(X,K,V1), h(X,Y), p2(Y,K,V2), V1 <> V2.
delete_prop(X,K) :- p1(X,K,_), h(X,Y), not p2(Y,K,_).
delete_prop(X,K) :- p1(X,K,_), delete_node(X).
delete_prop(X,K) :- p1(X,K,_), delete_edge(X).
insert_prop(Y,K,V) :- p2(Y,K,V), h(X,Y), not p1(X,K,_).
insert_prop(Y,K,V) :- p2(Y,K,V), insert_node(Y,_).
insert_prop(Y,K,V) :- p2(Y,K,V), insert_edge(Y,_,_,_).

node_cost(Y,1) :- insert_node(Y,_).
node_cost(X,1) :- delete_node(X).

edge_cost(Y,1) :- insert_edge(Y,_,_,_).
edge_cost(X,1) :- delete_edge(X).

prop_cost(X,K,1) :- update_prop(X,K,V1,V2).
prop_cost(X,K,1) :- delete_prop(X,K).
prop_cost(Y,K,1) :- insert_prop(Y,K,V).

#minimize { NC,X : node_cost(X,NC);
              EC,X : edge_cost(X,EC);
              LC,X,K : prop_cost(X,K,LC)}.
\end{lstlisting}
\end{figure}

Listing~\ref{fig:ged} accomplishes this.  The first four lines specify
that $h$ must be a partial isomorphism, by dropping the
requirement that $h$ must match all nodes/edges on one side with those
of another, and dropping the hard constraint that properties must match.
Lines 6--7 define when a node must be deleted or inserted.  Nodes that
are in $G_1$ and not matched in $G_2$ must be deleted, and conversely
those that are in $G_2$ and not matched in $G_1$ must be inserted.
Lines 9--10 similarly specify when edges must be inserted or deleted.
Lines 12--18 define when a property is updated in-place, deleted, or
inserted.  If a property key is present on an object in $G_1$ and on
the matching object in $G_2$ but with a different value, then the
key's value needs to be updated.  If it is present in $G_1$ but not
present on the matching object in $G_1$ then it is deleted.
Likewise, if it is present in $G_1$ but the associated object is
deleted then the property also must be deleted. 
Dually, properties are inserted if they are present in $G_2$ but not
in $G_1$, either because the matching object does not have that
property or because there is no matching object because the property
is on an inserted object.
Lines 20--28 specify the costs associated with each of the edit
operations.  We assign each operation a cost of 1.  It would also be
possible to assign different (integer) costs to different kinds of updates, or
even to specify different costs depending on labels, keys, or values.

\section{Correctness}

We first state the intended correctness properties for the
homomorphism, isomorphism, and subgraph isomorphism problems:
\begin{theorem}\label{thm:correctness}
  \begin{enumerate}
  \item There exists a homomorphism $h : G_1 \to G_2$ if and only if
    $ Hom_h(G_1,G_2)$ is satisfiable.
  \item There exists an isomorphism $h : G_1 \to G_2$ if and only if
    $ Iso_h(G_1,G_2)$ is satisfiable.  
  \item  $h : G_1 \to G_2$ witnesses a subgraph isomorphism if and only if
    $Sub_h(G_1,G_2)$ is satisfiable.  
\end{enumerate}
\end{theorem}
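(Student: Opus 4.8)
The plan is to prove each of the three biconditionals by establishing a correspondence between the binary relation $h$ appearing in any answer set of the ASP program and a function (respectively, an injective function or a bijection) on the underlying graphs satisfying the homomorphism conditions from the Background section. Since all three programs share the structure of Listing~\ref{fig:gh}, I would first prove part~(1) carefully and then reuse that analysis for parts~(2) and~(3). The key observation is that the representation $Rel_G$ is faithful (``we can recover the original graph from this representation''), so I may freely identify the facts $\mathtt{n}_1,\mathtt{e}_1,\mathtt{p}_1$ and $\mathtt{n}_2,\mathtt{e}_2,\mathtt{p}_2$ with the components of $G_1$ and $G_2$.

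For part~(1), I would argue both directions. For the forward direction, given a homomorphism $h : G_1 \to G_2$, I define the relation $\{\,h(x,y) \mid y = h(x)\,\}$ and verify it satisfies all three clauses of Listing~\ref{fig:gh}: the first cardinality constraint holds because $h$ is a total function matching labels ($lab_2(h(v)) = lab_1(v)$); the second holds because $h$ preserves sources, targets, and edge labels, so for each edge $e \in E_1$ the edge $h(e)$ witnesses the body; and the integrity constraint on line~3 holds because $prop_2(h(x),k) \sqsubseteq prop_1(x,k)$, which in the faithful representation means every $\mathtt{p}_1(x,k,d)$ fact forces $\mathtt{p}_2(h(x),k,d)$. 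For the backward direction, I take any answer set, read off $h$ as the extension of the predicate $\mathtt{h}/2$, and check that the two cardinality constraints $\{\cdots\}=1$ force $h$ to be a total function on $V_1$ and on $E_1$ (exactly one image each), preserving labels and incidence, while the integrity constraint guarantees the property condition. The point flagged in the surrounding text---that the first cardinality constraint makes $h$ total on nodes, so the body of the second clause is always satisfiable for each edge---is exactly what lets me conclude $h$ is well-defined on edges as well.

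For part~(2), I would observe that $Iso_h$ adds the symmetric clauses of Listing~\ref{fig:gi}, which assert that $h$ (viewed with arguments swapped) is also a total homomorphism from $G_2$ to $G_1$. The analysis of part~(1) applied in both directions shows that $h$ is total and label/incidence-preserving as a function $V_1 \to V_2$ and its converse is total as a function $V_2 \to V_1$; standard reasoning then yields that $h$ is a bijection whose inverse is a homomorphism, i.e.\ an isomorphism, and conversely any isomorphism satisfies all four clauses. For part~(3), the relaxation of the two cardinality constraints from $=1$ to $\le 1$ (Listing~\ref{fig:sgi}, combined with the $=1$ constraints of Listing~\ref{fig:gh}) makes $h$ total on $G_1$ but injective into $G_2$: the forward clauses of $Hom_h$ force totality and the homomorphism conditions, while the $\le 1$ clauses force that at most one element of $G_1$ maps to each node or edge of $G_2$, which is precisely injectivity. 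By the equivalence already established in the Background section---that $G_1 \lesssim G_2$ iff there is an injective homomorphism---this gives the desired biconditional.

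I expect the main obstacle to be the edge clause in the backward direction, specifically verifying that the second cardinality constraint genuinely pins $h$ down to a single, incidence-consistent edge image rather than merely asserting existence. The subtlety is that the body of clause~2 refers to $h$ recursively through the conditions $\mathtt{h}(S1,S2)$ and $\mathtt{h}(T1,T2)$, so I must confirm that in a stable model the node-matching established by clause~1 has already fixed the source and target images before the edge image is constrained, and that no spurious circular support arises. Making this precise requires appealing to the well-foundedness of answer-set semantics (that every atom in a stable model is non-circularly supported), which I would state explicitly to rule out self-justifying edge assignments.
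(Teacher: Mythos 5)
Your proposal is correct and takes essentially the same approach as the paper's own proof: a direct two-way verification for part (1) identifying $h$ with the extension of the $\mathtt{h}/2$ predicate, followed by reducing part (2) to forward-plus-backward homomorphisms and part (3) to homomorphism plus injectivity. The circular-support subtlety you flag at the end is in fact harmless---node and edge identifiers are disjoint, so atoms $\mathtt{h}(S1,S2)$, $\mathtt{h}(T1,T2)$ in the edge clause's condition can only be supported by the node clause, never by the edge clause itself---so addressing it merely makes your write-up slightly more careful than the paper's, not different in route.
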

\begin{proof}
See \refapp{app:proofs}.
  \qed
\end{proof}

\begin{figure}[t!]
\[\small
\begin{array}{rcl}
  \delE^*(e);\delP(x,k) &\longrightarrow & \delP(x,k);\delE^*(e)
\smallskip\\
  \delV^*(v);\delP(x,k) &\longrightarrow & \delP(x,k);\delV^*(v)
\\
  \delV^*(v);\delE(e) &\longrightarrow & \delE(e);\delV^*(v)
\smallskip\\
  \updP^*(x,k,d);\delP(y,k') &\longrightarrow & \left\{
                                              \begin{array}{ll}
                                                \delP(y,k')& \text{if }x=y,k=k'\\
                                                \delP(y,k');
                                                \updP^*(x,k,d) & \text{otherwise}
                                              \end{array}
\right.\\
  \updP^*(x,k,d);\delE(e) &\longrightarrow & \delE(e);\updP^*(x,k,d)
\\
  \updP^*(x,k,d);\delV(v) &\longrightarrow & \delV(v);\updP^*(x,k,d)
\smallskip\\
\insV^*(v,l); \delP(x,k) &\longrightarrow& \delP(x,k);\insV^*(v,l)
\\
\insV^*(v,l); \delE(e) &\longrightarrow& \delE(e);\insV^*(v)
\\
\insV^*(v,l); \delV(v') &\longrightarrow& \left\{
                                              \begin{array}{ll}
                                                \epsilon& \text{if }v=v'\\
                                                \delV(v');
                                                \insV^*(v,l) & \text{otherwise}
                                              \end{array}\right.
\\
\insV^*(v,l); \updP(x,k,d) &\longrightarrow& \updP(x,k,d);\insV^*(v,l)
\smallskip\\
\insE^*(e,v,w,l); \delP(x,k) &\longrightarrow& \delP(x,k);\insE^*(e,v,w,l)
\\
\insE^*(e,v,w,l); \delE(e') &\longrightarrow& \left\{
                                              \begin{array}{ll}
                                                \epsilon& \text{if }e=e'\\
                                                \delE(e');
                                                \insE^*(e,v,w,l) & \text{otherwise}
                                              \end{array}\right.
\\
\insE^*(e,v,w,l); \delV(v') &\longrightarrow& \delV(v'); \insE^*(e,v,w,l)
\\
\insE^*(e,v,w,l); \updP(x,k,d) &\longrightarrow& \updP(x,k,d); \insE^*(e,v,w,l)
\\
\insE^*(e,v,w,l); \insV(v',l) &\longrightarrow& \insV(v',l);
                                                \insE^*(e,v,w,l)
\smallskip\\
\insP^*(x,k,d); \delP(y,k') &\longrightarrow& \left\{
                                              \begin{array}{ll}
                                                \epsilon& \text{if }x=y,k=k'\\
                                                \delP(y,k');
                                                \insP^*(x,k,d) & \text{otherwise}
                                              \end{array}
\right.\\
\insP^*(x,k,d); \delE(e) &\longrightarrow& 
                                                \delE(e);
                                                \insP^*(x,k,d) 
\\
\insP^*(x,k,d); \delV(v) &\longrightarrow& 
                                                \delV(v);
                                                \insP^*(x,k,d) 
\\
\insP^*(x,k,d); \updP(y,k',d') &\longrightarrow& \left\{
                                                \begin{array}{ll}
                                                \insP^*(x,y,d')& \text{if }x=y,k=k'\\
                                                  \updP(y,k',d');
                                                \insP^*(x,k,d) & \text{otherwise}
                                              \end{array}
\right.\\
\insP^*(x,k,d); \insV(v,l) &\longrightarrow& 
                                                \insV(v',l);
                                                \insP^*(x,k,d) \\
\insP^*(x,k,d); \insE(e,v,w,l) &\longrightarrow& 
                                                \insE(e,v,w,l);
                                                \insP^*(x,k,d) \\
op^*; ops &\longrightarrow& 
                                                op; ops \qquad
                            \text{if no earlier rule applies}
\end{array}
\]\caption{Edit script rewrite rules}\label{fig:rewrite}
\end{figure}

Next we turn to graph edit distance.  
To assist with the reasoning, we define the following
canonical form:
\begin{definition}
  [Edit script canonical form]
An edit script is in \emph{canonical form} if it is of the form
$del_{p};del_{e};del_{v};upd_p;ins_v;ins_e;ins_p$, 
where:
\begin{itemize}
\item $del_p$, $del_e$ and $del_v$ are sequences of property
  deletions, edge deletions, and node deletions respectively;
\item $upd_p$ is a sequence of property updates;
\item $ins_v$,  $ins_e$, and $ins_p$ are sequences of node insertions,
  edge insertions, and property insertions, respectively.
\end{itemize}
\end{definition}
Edit scripts obtained from $GED_h(G_1,G_2)$ are in this form.
Moreover, any valid edit script can be converted to a canonical one by
applying a set of rewrite rules, as shown in Figure~\ref{fig:rewrite}.
We first consider \emph{marked} versions $op^*$ of each edit
operation, for example writing $\delP^*(x,k)$ for the marked version
of $\delP$.  A marked operation $op^*$ has the same effect as $op$
when applied to a graphs; the mark is only to indicate which operation
is actively being rewritten.  The idea here is that if we have a
canonical edit script $ops$ and wish to add a new edit operation, we
use the rewrite rules to canonicalize $op^*;ops$.  The rules are
applied in order and at each step, the first matching rule is applied.
Note that there is a catch-all rule $op^*;ops \longrightarrow op;ops$,
which only applies if none of the other rules do. Essentially, the
rewrite rules consider all of the possible pairs of adjacent
operations that can appear in a non-canonical form, with the first
element marked.  In each case, they show how to simplify the edit
script by either moving the marked operation closer to the end, or
removing the mark.  Removal can happen as a result of either
cancellation of the marked operation by another operation (e.g. a
delete undoing an insert), or by removing the mark once it has reached
an appropriate place for it in the canonical form.

\begin{lemma}
  \label{lem:canonical}
 If $ops$ is an edit script mapping $G_1$ to $G_2$, then there is
  a canonical edit script $ops'$ mapping $G_1$ to $G_2$ such that
  $|ops'| \leq |ops|$.
\end{lemma}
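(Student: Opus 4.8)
The plan is to prove the lemma by structural induction on the length of $ops$, peeling off the \emph{first} operation and using the rewrite rules of Figure~\ref{fig:rewrite} to reinsert it into an already-canonical tail. Concretely, if $ops = op; ops_1$ with $ops(G_1) = G_2$, then $ops_1$ is a valid script from $op(G_1)$ to $G_2$, so by the induction hypothesis there is a canonical $c_1$ with $c_1(op(G_1)) = G_2$ and $|c_1| \le |ops_1|$. It then suffices to show that rewriting the marked script $op^*; c_1$ to normal form yields a canonical script $c$ with $c(G_1) = G_2$ and $|c| \le |c_1| + 1 \le |ops|$. The base case $ops = \epsilon$ is immediate. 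The heart of the argument is thus an auxiliary claim about the rewrite system: for any operation $op$ and canonical script $c$ such that $op^*; c$ is applicable to a graph $G$, repeatedly applying the first matching rule of Figure~\ref{fig:rewrite} terminates in a canonical script $c'$ that is applicable to $G$, satisfies $c'(G) = (op; c)(G)$, and has $|c'| \le |c| + 1$. I would establish this in three steps.

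First, \textbf{soundness of each rule}. For every rule $L \longrightarrow R$ I would check that whenever $L$ is applicable to a graph $H$, then $R$ is also applicable to $H$, the two have the same effect ($R(H) = L(H)$), and $|R| \le |L|$; this then lifts to an arbitrary context $S_1; L; S_2$ by applying the rule at the graph $S_1(G)$ reached after the prefix. The length condition is immediate, since swaps preserve length while cancellations and merges strictly decrease it. Most of the equalities are routine commutations of operations acting on disjoint parts of the graph; the cases needing care are those where \emph{preconditions} interact with the reordering. For instance, in $\delE^*(e); \delP(x,k) \longrightarrow \delP(x,k); \delE^*(e)$ the applicability of $L$ forces $x \ne e$, since an edge with no properties cannot carry the property being deleted, and this is exactly what makes the swap preserve both applicability and effect; the cancellation rules such as $\insV^*(v,l); \delV(v) \longrightarrow \epsilon$ and the merge rules such as $\insP^*(x,k,d); \updP(x,k,d') \longrightarrow \insP^*(x,k,d')$ are verified by unfolding Table~\ref{tab:edit-semantics} and observing that the relevant precondition (a freshly inserted node is deletable; a freshly inserted property is updatable) always holds.

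Second, \textbf{termination and normal form}. Each non-catch-all, non-cancelling rule moves the single mark one position to the right while preserving length, whereas the catch-all, cancellation, and merge steps either remove the mark or strictly shorten the script; since the mark can move right only finitely often in a script of bounded length, the process terminates. Reading the rules against the category ordering $\delP < \delE < \delV < \updP < \insV < \insE < \insP$ implicit in the canonical form, the marked operation $op^*$ can only interact (by swapping, cancelling, or merging) with operations of strictly \emph{earlier} category, which thereby settle into their correct place to the left of the mark, and the mark is dropped by the catch-all exactly when $op^*$ first meets an operation of its own or a later category. I would record this as the invariant ``the prefix strictly to the left of the mark is canonical, and every operation to the right of the mark belongs to a category no earlier than that of $op$,'' and check it is preserved by every rule; at termination this forces $op$ to be inserted at the front of its own category block (or annihilated, or merged), so the resulting script is canonical.

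Finally, I would \textbf{assemble} these pieces: soundness gives $c'(G) = (op^*; c)(G) = (op; c)(G)$ and $|c'| \le |op^*; c| = |c| + 1$, while termination and the invariant give that $c'$ is canonical. Feeding this back into the structural induction produces the canonical script $ops'$ with $ops'(G_1) = G_2$ and $|ops'| \le |ops|$, as required. I expect the two genuinely delicate points to be the precondition bookkeeping in the soundness step and making the normal-form invariant robust against the cancellation and merge rules, which delete or rewrite operations mid-stream; everything else is bounded case analysis over the finitely many rule shapes in Figure~\ref{fig:rewrite}.
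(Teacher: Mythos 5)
Your overall strategy coincides with the paper's proof (Lemma 3 in the appendix, of which the statement is part 3): induct on the length of $ops$, peel off the first operation, canonicalize the tail via the induction hypothesis, and establish an auxiliary claim that normalizing the marked script $op^*;c$ for canonical $c$ yields a canonical script with the same effect and length at most $|c|+1$. Your decomposition of that claim into per-rule soundness and a termination argument matches the paper's parts 1 and 2 (your termination measure is equivalent to the paper's ``$*$-length''), and your precondition bookkeeping (e.g.\ that applicability of $\delE^*(e);\delP(x,k)$ forces $x\neq e$) is correct and in fact more explicit than what the paper records.

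There is, however, one step that would fail as written: the invariant you propose for the canonicity-of-normal-forms argument. You state it as ``the prefix strictly to the left of the mark is canonical, and every operation to the right of the mark belongs to a category no earlier than that of $op$.'' The second conjunct is false at the very first rewriting step: the canonical tail $c$ will in general contain operations of categories \emph{earlier} than $op$'s --- that is precisely why the mark must move right --- so the invariant cannot be established initially and there is nothing for the rules to preserve. Moreover, even in states where it does hold it is too weak: knowing only that the prefix is canonical does not exclude its containing operations of category \emph{later} than $op$'s, in which case dropping the mark does not produce a canonical script. What you need is the mirror-image invariant: writing the state as $c_1;op^*;c_2$, maintain (i) $c_1;c_2$ is canonical and (ii) every operation in $c_1$ has category strictly earlier than that of the marked operation. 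This holds initially ($c_1=\epsilon$), is preserved by swaps (which move an earlier-category operation into $c_1$ and leave the sequence $c_1;c_2$ unchanged), and by merges and cancellations (which delete an operation from a canonical sequence); at the catch-all step, the fact that no earlier rule applies forces the head of $c_2$ to have category no earlier than $op$'s (since the rules cover every earlier-category pair), and then (i) and (ii) together make $c_1;op;c_2$ canonical. With that repair your argument goes through and is essentially the paper's proof.
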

\begin{proof}
See \refapp{app:proofs}.
\qed
\end{proof}

\begin{theorem}
  The specification $GED_h(G_1,G_2)$ always has a solution, and the
  edit script described by the insertion, deletion and update
  predicates is a valid, canonical script mapping $G_1$ to $G_2$.
  Moreover, the 
  cost of the optimal solution to $GED_h(G_1,G_2)$ equals $GED(G_1,G_2)$.
\end{theorem}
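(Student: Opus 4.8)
The plan is to establish the three assertions in turn: existence of a solution, validity and canonicity of the extracted script, and equality of the optimal cost with $GED(G_1,G_2)$. As a preliminary I would record the general fact that in any answer set the first four lines of Listing~\ref{fig:ged} force $h$ to be a \emph{partial isomorphism}: an injective, label-, source-, and target-preserving partial map whose converse is also such a map. This is exactly the subgraph-isomorphism condition of Listing~\ref{fig:sgi} imposed in both directions with the totality requirement dropped, so the reasoning reuses the argument behind Theorem~\ref{thm:correctness}(3). Existence is then immediate: the only constraints on $h$ are cardinality bounds $\{\dots\} \le 1$, all satisfied vacuously by the empty matching $h = \emptyset$, so $GED_h(G_1,G_2)$ always has at least one answer set.

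Next I would show that, for an arbitrary answer set, assembling the atoms of the delete/update/insert predicates in the canonical order $del_p;del_e;del_v;upd_p;ins_v;ins_e;ins_p$ yields a valid script taking $G_1$ to $G_2$ (modulo the renaming that identifies each matched $v$ with $h(v)$, which is the sense in which the disjoint-identifier convention is reconciled with the edit semantics). Validity is a precondition check performed stage by stage: all property deletions precede the deletion of their carrier, edge deletions precede node deletions, insertions occur in the dual order, and every $\updP$ acts on a property that is still present --- these orderings are precisely what the canonical form guarantees. That the result equals $G_2$ is a case analysis over nodes, edges, and properties: an object of $G_1$ either is matched (and survives) or is deleted by lines 6, 9 or 13--15, an object of $G_2$ either is matched or is inserted by lines 7, 10 or 16--18, and for a matched pair the rules on lines 12--18 change exactly those properties on which $G_1$ and $G_2$ disagree. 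Finally I would note that the objective equals the length of this script: each derived operation contributes exactly one unit, and no two operations collapse to the same element of the minimize set, because node and edge identifiers are globally disjoint and, for any fixed $(X,K)$, the update/delete/insert alternatives are mutually exclusive. This also discharges the remark that scripts produced by $GED_h$ are canonical.

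It remains to prove that the optimal cost equals $GED(G_1,G_2)$, which I would split into two inequalities. The bound $GED(G_1,G_2) \le \text{(optimal cost)}$ is immediate from the previous paragraph, since every answer set yields a valid script whose length equals its cost. For the reverse bound I would start from an optimal edit script, apply Lemma~\ref{lem:canonical} to obtain a canonical script $ops'$ with $|ops'| \le GED(G_1,G_2)$, and read a matching $h$ off $ops'$ by pairing each non-deleted node/edge of $G_1$ with the corresponding non-inserted node/edge of $G_2$. I would then verify that $h$ satisfies the four cardinality constraints --- it is a partial isomorphism because $ops'$ leaves the common substructure intact --- and that the script the specification derives from $h$ has length at most $|ops'|$: $ops'$ is forced to contain every deletion and insertion dictated by which objects survive, and must reconcile the properties of each matched pair, which the derived script does with the fewest possible $\updP$, $\delP$, and $\insP$ operations. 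Hence there is an answer set of cost $\le |ops'| \le GED(G_1,G_2)$, completing the equality.

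I expect the main obstacle to be this last correspondence between canonical scripts and matchings, together with the bookkeeping needed to make it precise. Two points need care: justifying that the structure surviving a canonical script really does determine a \emph{legal} partial isomorphism $h$ (so that $h$ is genuinely a solution, not merely a relation), and arguing the minimality claim that $ops'$ performs at least as many operations as the script derived from its induced $h$ --- in particular that any redundant or superfluous property operations in $ops'$ can only increase its length. Handling the disjoint-identifier convention consistently throughout, by treating a matched $v$ and $h(v)$ as the same object under renaming, is the other place where the argument must be stated carefully to avoid the degenerate ``delete everything, insert everything'' reading of the edit semantics.
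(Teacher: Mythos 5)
Your proposal is correct and takes essentially the same approach as the paper: existence via the empty matching $h=\emptyset$, validity of the derived script via the canonical ordering of its operations, the easy inequality from the fact that every answer set yields a valid script whose length equals its cost, and the reverse inequality via Lemma~\ref{lem:canonical}. If anything, your argument is more complete than the paper's: the step of reading a partial isomorphism $h$ off the canonical script and checking that the script the specification derives from $h$ costs no more (including the point about redundant property operations, and the identifier-renaming convention) is left implicit in the paper's own proof.
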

\begin{proof}
For the first part, we observe that the empty relation $ h =
\emptyset$ always solves $GED_h(G_1,G_2)$ if we ignore the
minimization constraint.  Therefore, the cost of this solution is an
upper bound.  Moreover, if we apply the edit
operations described by the insert, delete and update relations in the
order required by the canonical form, 
then each edit operation is valid, all structure present in $G_1$ and
not $G_2$ is removed, all properties whose values differ in $G_1$ and
$G_2$ are updated, and all structure present in $G_2$ and not $G_1$ is
inserted.  Therefore, the corresponding edit script maps $G_1$ to $G_2$.

To show that the minimum cost obtained from solving the
$GED_h(G_1,G_2)$ specification coincides with $GED(G_1,G_2)$, one
direction is easy: for any $h$ (including the one corresponding to a
minimum cost solution) the collection of edit operations resulting
from $GED_h(G_1,G_2)$ is a valid edit script so its length $d$ must be
greater than or equal to the minimum over all valid scripts.  To show
the reverse direction, we use Lemma~\ref{lem:canonical}.  Given a
minimum-length edit script that is not in canonical form, we can
rewrite it to one that is canonical, with equal cost (since the original script
was already minimum-length).  \qed
\end{proof}

\section{Discussion}
We have argued that using ASP offers considerable flexibility.  To illustrate this
claim, we consider three modifications to our approach.

\paragraph{Weighted graph edit distance} If the operations have
different (integer) weights, implemented using a suitable modification
to the cost predicates in some specification $wGED_h(G_1,G_2)$, then
the same argument as above suffices to show that a minimum-weight canonical
script always exists to be found by the ASP specification.  The key
point is that weights are defined on individual edit operations, and
the rewrite rules only permute or delete operations, so preserve or
decrease weight.  

\paragraph{Relabeling} We have treated labels as hard constraints: it
is not possible to change the label of a node in $G_1$ to a different
label in $G_2$, short of deleting the node and inserting a new one
with a different label.  On the other hand, properties are soft
constraints in the sense that we may delete or update a property value
without also being obliged to delete and re-create the underlying node
or edge structure.  It is natural to consider an in-place relabeling
operation as well.  Such behavior can be encoded on top of the
already-developed framework by using a single ``blank'' label for
nodes and edges and introducing an unused property key called
``label'' instead; now this can be updated in-place like other
properties.  Alternatively, we can accommodate this behavior 
more directly as shown in Listing~\ref{fig:ged-relabel}.
\begin{figure}[tb]
  \lstset{frame=single,numbers=left,caption=Graph edit distance with
    relabeling (modifies Listing~\ref{fig:ged}),label={fig:ged-relabel}}
\begin{lstlisting} 
{h(X,Y) : n2(Y,_)} <= 1 :- n1(X,_).  
{h(X,Y) : n1(X,_)} <= 1 :- n2(Y,_).  
{h(X,Y) : e2(Y,S2,T2,_), h(S1,S2), h(T1,T2)} <= 1 :- e1(X,S1,T1,_).  
{h(X,Y) : e1(X,S1,T1,_), h(S1,S2), h(T1,T2)} <= 1 :- e2(Y,S2,T2,_).
...  
relabel_node(X,L2) :- n1(X,L1),h(X,Y), n2(Y,L2), L1 <> L2.  
relabel_edge(X,L2) :- e1(X,_,_,L1),h(X,Y),e2(Y,_,_,L2), L1 <> L2.  
...  
node_cost(X,1) :- relabel_node(X,_).  
edge_cost(X,1) :- relabel_edge(X,_).
\end{lstlisting}
\end{figure}
The first four lines relax the constraint that node and edge labels
have to be preserved by $h$.  The next two lines define the
\lstinline{relabel_node} and \lstinline{relabel_edge} predicates to
detect when two matched nodes or edges have different labels.
Finally, the \lstinline{node_cost} and \lstinline{edge_cost}
predicates are extended to charge a cost of 1 per relabeling.

\paragraph{Ad hoc constraints}
The use of ASP opens up many other possibilities for controlling or
constraining the various isomorphism or edit distance problems.
One example which we found useful in previous work~\cite{chan19mw} was to modify the
definitions of isomorphism or subgraph isomorphism to treat properties
as soft constraints and minimize the number of mismatched properties.

Another potentially interesting class of constraints is to allow
``access control'' constraints on the possible edit scripts, for
example specifying that certain nodes or edges in one graph cannot not
be modified and so must be matched with equivalent constructs in the
other graph.  This is similar to the approximate constrained subgraph
matching problem~\cite{zampelli05cp}.

\section{Evaluation}

Graph matching and edit distance are widely studied problems and a thorough
comparison of our approach with state-of-the-art algorithms is beyond
the scope of this paper.  However, we do not claim that our approach
is faster, only that it is easy to implement and modify, rendering it
suitable for rapid prototyping situations.  Nevertheless, in this
section we summarize a preliminary evaluation that supports a claim
that our approach is fast enough to be useful for rapid prototyping.  Our
experiments were run on an 2.6 GHz Intel Core i7 MacBook Pro machine
with 8 GB RAM and using Clingo v5.2.0.

First, we consider the various problems on synthetic graphs, such as
$k$-cycles and $k$-chains (linear sequences of $k$ edges), with only
one possible node and edge label and no properties.  These problems
are not representative of typical real problems, but illustrate some
general trends.  We considered each of the problems: (HOM), (ISO)
$G_1 \cong G_2$, (SUB) $S_n \lesssim C_n$, and (GED) $GED(G_1,G_2)$.
We first considered comparisons where $G_1$ and $G_2$ are $k$-cycles
or $k$-chains, for $k \in \{10,20,\ldots,100\}$.  We found the running
times for each of these problems to be relatively stable independent
of whether the comparison was between two $k$-chains, a
$k$-chain with a $k$-cycle, or two $k$-cycles, so we have averaged
across all four scenarios.  We also considered randomly generated
graphs with $k$ nodes and each edge generated with probability 0.1,
with $k \in \{5,10,\ldots,50\}$.  We attempted each problem with a
running time limit of 30 seconds; the results are shown in
Figure~\ref{fig:synthresults} results.  Unsurprisingly, the HOM
instances are solved fastest, and GED slowest.

\begin{figure}[tb]
  \centering
  \begin{tabular}{cc}
(a) & (b)\\
  \includegraphics[scale=0.5]{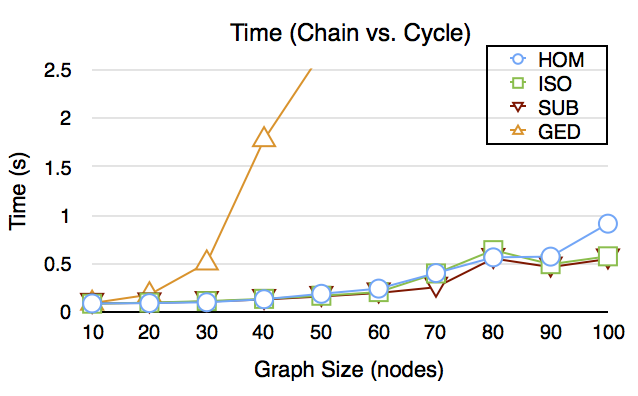}
&
\includegraphics[scale=0.5]{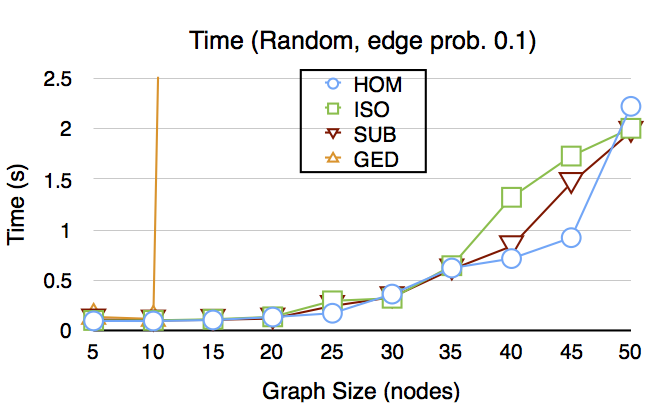}
  \end{tabular}\caption{Synthetic results: (a) chains and cycles (b)
    randomly generated graphs}
  \label{fig:synthresults}
\end{figure}

Second, we consider some real graphs from the Mutagenesis dataset
(MUTA), a standard dataset used for evaluating graph edit distance
algorithms~\cite{muta}, for example in a recent graph edit distance
competition (GEDC)~\cite{abuaisheh17prl}.  In the contest, eight
algorithms were run on different problems for up to 30 seconds, and
compared in terms of time, accuracy (for approximate algorithms), and
success rate (for exact algorithms).  We modified the GED specification to
allow node and edge relabeling and use the same weight function as in
the second (and more challenging) configuration used in the contest,
for which even the best algorithm (called F2) was not able to deal
with graphs of size larger than 30.  We consider three datasets
MUTA-10, MUTA-20 and MUTA-30 each consisting of ten chemical structure
graphs of size 10, 20 or 30 respectively.  We also consider a dataset
MUTA-MIXED which consists of ten graphs of varying sizes.  We
considered all unordered pairs of the graphs in each subset and attempted to find
the GED with a timeout of 30s.  Table~\ref{tab:gedc} shows the results
compared with the four exact algorithms reported
in~\cite{abuaisheh17prl}. The first two algorithms, F2 and F24threads,
are implementations of a binary linear programming encoding of graph
edit distance~\cite{lerouge17pr}, the first being the plain
single-threaded algorithm, the second a variant using a linear
programming solver, and the second running with four threads.  The
other two, DF and PDFS, are sequential and parallel implementations of
a depth-first, branch-and-bound algorithm~\cite{abuaisheh18esa,abuaisheh15icpram}.

Table~\ref{tab:gedc} illustrates that our approach is competitive
with DF and slightly worse than PDFS, but does not match the performance of the two
F2 algorithms.  These results should be taken with a grain of salt,
since we have not replicated the GEDC results on our (slightly faster)
hardware.  Memory did not appear to be a bottleneck for our approach.

\begin{table}[tb]
\caption{Success rate (optimal solution found in under 30s) on Mutagenesis dataset}\label{tab:gedc}
\begin{center}
  \begin{tabular}[tb]{|l|cccc|}
    \hline
    & MUTA-10 & MUTA-20 & MUTA-30 & MUTA-MIXED\\
    \hline
    F24threads~\cite{abuaisheh17prl,lerouge17pr}$^\dagger$ & 100\% &
                                                                     98\%
                        & 23\% & 44\%\\
    F2~\cite{abuaisheh17prl,lerouge17pr}$^\dagger$ & 100\% & 94\% &
                                                                    15\%
                                  & 41\%\\
 PDFS~\cite{abuaisheh17prl,abuaisheh18esa}$^\dagger$ & 100\% & 26\% & 11\% & 10\%\\
    Our approach & 100\% & 26\% & 10\%  & 4\%\\
    DF~\cite{abuaisheh17prl,abuaisheh15icpram}$^\dagger$ & 100\% & 14\% & 10\% & 10\% \\
    \hline
  \end{tabular}
\end{center}
\noindent \footnotesize
$^\dagger$Experiments from~\cite{abuaisheh17prl} run on a
4-core 2.0GHz AMD Opteron 8350 with 16GB RAM.
\end{table}

We have implemented and used variations of the isomorphism and
subgraph isomorphism specifications for property graphs in a
provenance graph analysis system called ProvMark~\cite{chan19mw}.  In
this earlier work, we found that for
graphs of up to around 100 nodes and edges, and a few hundred
properties, these problems are usually solvable within a few seconds.  
However, these problems may not be representative of other scenarios.

The specifications we used to define approximate subgraph isomorphism
problems in ProvMark are similar to those presented here, but we
subsequently experimented with several different approaches with
different performance.  Here, we compare the performance of ProvMark
on subgraph isomorphism problems over two representative example
graphs considered in our previous experiments: the graph
generalization and comparison problems resulting from benchmarking the
\texttt{creat} and \texttt{execve} system calls using the CamFlow
provenance recording system~\cite{pasquier17socc}.
See~\cite{chan19mw} for further details and the Clingo code of the
previous approaches.  

\begin{table}[tb]\label{tab:provmark}
\caption{Performance improvement vs.\ Provmark~\cite{chan19mw}}
\centering
\begin{tabular}{|l|c|ccc|}
\hline
  Experiment & Size
& Old Time (s)& New Time (s) & Speedup\\
\hline
creat-bg-gen & 1006 &0.060  & 0.034  &  1.9x 
\\
creat-fg-gen &  1060 & 0.070 & 0.037 & 1.9x
\\
creat-comp & 1033 & 0.053 & 0.026 & 2.1x\\
execve-bg-gen & 1006 & 0.061 & 0.036 & 1.7x
\\
execve-fg-gen & 1340 & 0.114 & 0.051 & 2.2x
\\
execve-comp & 1173 & 0.083 & 0.042 & 1.9x
\\
\hline
\end{tabular}
\end{table}
Table~\ref{tab:provmark} shows the running time of the old version and
new version of approximate subgraph isomorphism. The code for both
specifications is in \refapp{app:code}.  The problem sizes
(that is, the number of nodes, edges, and properties of the two
graphs) is shown under ``Size''.  The ``Old Time'' column corresponds
to the time obtained using the old approach and ``New Time'' shows the
time obtained using the code in Listing~\ref{fig:sgi} modified to
allow approximate property matching.  The ``Speedup'' column shows the
ratio between the old and new time.  In most cases, the speedup is
around a factor of two.  As future work, we plan to use graph
edit distance with the results of the ProvMark system, for example for
clustering or regression testing across runs.

\section{Related Work}

The lower bound of the complexity of graph isomorphism is a well-known
open problem~\cite{arvind05beatcs}, but subgraph isomorphism and graph
edit distance are NP-complete~\cite{garey79}. A number of practical
algorithms for graph isomorphism have been studied, however, including
NAUTY~\cite{mckay81cn}, which has also been integrated with
Prolog~\cite{frank16tplp}. However, most such algorithms consider
graphs with vertex labels but not edge labels or properties, so are
not directly applicable to property graph isomorphism.  Subgraph
isomorphism has been studied extensively over the past years, one
survey~\cite{mckay2014jsc} summarizes the state-of-art algorithms for
solving partial or simplified version of the problem. Subgraph
isomorphism is also studied for graph databases, where the query
subgraph is usually small but the other graph may be very large.  Lee
et al.~\cite{lee12vldb} evaluated five such algorithms on query graphs
of up to 24 edges and database of up to tens of thousands of nodes and
edges.  Approximate subgraph matching with constraints has also been
studied, particularly in biomedical settings~\cite{zampelli05cp}, and
it would be interesting to investigate whether our approach is
competitive with their CSP-based algorithm.  Graph edit distance has
also been studied extensively~\cite{gao2010paa}, with much attention
on approximate algorithms that can provide results
quickly~\cite{riesen15}.

While several approaches to graph matching and edit distance have been
based on expressing these problems as constraint satisfaction
problems, satisfiability, or linear programming problems, to the best
of our knowledge there is no previous work based on answer set
programming.  Moreover, our approach easily accommodates richer graph
structure such as hard or soft label constraints, properties, and
multiple edges between pairs of nodes, whereas the algorithms we have
seen generally consider ordinary  graphs (without properties and with at
most one edge between two nodes).

\section{Conclusions}

The graph edit distance problem is a widely studied problem that has
many applications.  Exact solutions to it, and to related problems
such as graph isomorphism and subgraph isomorphism, are challenging to
compute efficiently due to their NP-completeness or unresolved
complexity (in the case of graph isomorphism).  There are a number of
proposed algorithms in the literature, with one of the most effective
based on a reduction to binary linear programming~\cite{lerouge17pr}.  In this
paper, we investigated an alternative approach using answer set
programming (ASP), specifically the Clingo solver.  This approach may
not be competitive with the best known techniques in terms of
performance, but has the potential advantage that it is
straightforward to modify the problem specification to accommodate
different kinds of graphs, cost metrics or other variations, or to
accommodate ad hoc constraints that can also be expressed using ASP.
Our approach has already proved useful for a real
application~\cite{chan19mw}, and our experimental evaluation suggests
that it is also competitive with two out of four exact algorithms from
a
graph edit distance competition.

Our work may be valuable to others interested in rapid prototyping of
graph matching or edit distance problems using declarative
programming.  Additional work could be done to facilitate this, for
example using Clingo's Python wrapper library.  Graph matching and
edit distance problems may also be an interesting class of challenge
problems for developers of ASP solvers.

\section*{Acknowledgments}
Effort sponsored by the Air Force Office of Scientific Research, Air
Force Material Command, USAF, under grant number
\grantnum{AFOSR}{FA8655-13-1-3006}. The U.S. Government and University
of Edinburgh are authorised to reproduce and distribute reprints for
their purposes notwithstanding any copyright notation thereon.  Cheney
was also supported by ERC Consolidator Grant Skye (grant number
\grantnum{ERC}{682315}).  This material is based upon work supported
by the Defense Advanced Research Projects Agency (DARPA) under
contract \grantnum{DARPA}{FA8650-15-C-7557}.

\bibliographystyle{plain}
\bibliography{paper}

\newpage
\appendix
\section{Proofs}\label{app:proofs}

\begin{proof}[of Theorem~\ref{thm:correctness}]
  We outline the proof of part 1.  We slightly abuse notation by
considering $h$ both as a function and a binary relation (i.e. the
graph of the function).  If $h$ is a homomorphism, then for any node
$v$, $h(v) = w $ holds for exactly one $w$ whose label must match that
of $v$, so line 1 of the homomorphism specification holds.  Likewise,
for any edge $e$ where $\mathtt{e}(e,v_1,w_1,l)$ and $h(v_1) = v_2$
and $h(w_1) = w_2$, then  $h(e) = e'$ where
$src(e') = h(v_1) = v_2$ and $tgt(e') = h(w_1)= w_2$ and $label(e') =l =
label(e)$.  Therefore there is exactly one $e'$ such that $h(e,e')$
holds and $\mathtt{e}_2(e',v_2,w_2,l)$.  Finally, by similar
reasoning, because $h$ preserves property values, line 3 holds.  Conversely, if the specification holds
of $h$, then line 1 implies that $h$ is a total, label-preserving
function on node identifiers, and line 2 implies that $h$ is a total, source,
target, and label-preserving function on edge identifiers.  Finally,
again line 3 implies that $h$ preserves property keys and values.

The proofs of part 2 and 3 are
straightforward: for part 2 it suffices to observe that $h$ is a
homomorphism from $G_1$ to $G_2$ (read in the forward direction) and
from $G_2$ to $G_1$ (in the backward direction) if and only if it is an
isomorphism.  For part 3, by part 1 the solution must be a graph
homomorphism and the additional constraint ensures that it is
injective, witnessing subgraph isomorphism.

\end{proof}

We prove Lemma~\ref{lem:canonical} as the third part of the following lemma:
\begin{lemma}\label{lemma}
\begin{enumerate}
\item If $ops$ maps $G_1$ to $G_2$ and $ops \longrightarrow ops'$ then
  $ops'$ maps $G_1$ to $G_2$ and $|ops'| \leq |ops|$.
\item If $ops$ is a canonical edit script mapping $G_2$ to
  $G_3$ and $op^*$ is an edit operation mapping $G_1$ to $G_2$ then
  $op^*;ops $ rewrites to a canonical edit script $ops'$ mapping $G_1$ to $G_3$
  with
  $|op;ops| \leq |ops'|$.
\item If $ops$ is an edit script mapping $G_1$ to $G_2$, then there is
  a canonical edit script $ops'$ mapping $G_1$ to $G_2$ such that
  $|ops'| \leq |ops|$.
\end{enumerate}
\end{lemma}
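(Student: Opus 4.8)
The plan is to prove all three parts of Lemma~\ref{lemma} in order, since part~3 (which is Lemma~\ref{lem:canonical}) follows from parts~1 and~2 by a short induction. Part~1 establishes that a single rewrite step is sound and non-lengthening; part~2 shows that prepending one marked operation to an already-canonical script can be normalised; and part~3 then builds an arbitrary script up one operation at a time, invoking part~2 at each stage.

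For part~1 I would argue by case analysis on which rewrite rule of Figure~\ref{fig:rewrite} fires. For each rule I must check two things. First, \emph{soundness}: if the left-hand side is a valid script taking $G_1$ to $G_2$, then so is the right-hand side, with the same net effect. For the permuting rules this amounts to checking that the two adjacent operations commute as graph transformations and, crucially, that the preconditions remain satisfied after the swap; for instance, in $\delV^*(v);\delE(e) \longrightarrow \delE(e);\delV^*(v)$ validity of the left side forces $v$ not to be an endpoint of $e$, which is exactly what keeps both deletions legal on the right. For the cancelling rules (such as $\insV^*(v,l);\delV(v)\longrightarrow\epsilon$) and the fusing rule ($\insP^*(x,k,d);\updP(x,k,d')\longrightarrow\insP^*(x,k,d')$) I verify that the composite really is the identity, respectively the single operation shown. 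Second, the length claim $|ops'|\le|ops|$ is immediate, since every rule either permutes (preserving length) or deletes/merges operations (strictly decreasing it).

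For part~2 I would use strong induction on $|ops|$, where $ops$ is the canonical script and $op^*$ the marked operation prepended to it. If $ops=\epsilon$ the catch-all rule yields the single canonical operation $op$. Otherwise write $ops=op_1;ops_1$ and apply the first matching rule to the pair $op^*;op_1$: a cancellation or fusion produces a strictly shorter script (respectively $ops_1$, or a marked operation in front of $ops_1$) to which the induction hypothesis applies; a permutation rewrites $op^*;op_1;ops_1$ to $op_1;op^*;ops_1$, and I apply the induction hypothesis to the shorter $op^*;ops_1$ before re-prepending $op_1$; and if the catch-all fires, $op^*$ is already correctly positioned and the mark is simply dropped. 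Throughout, part~1 supplies semantic preservation and the length bound. The point needing care is that re-prepending $op_1$ in the permutation case keeps the script canonical: this follows by assigning each operation kind a rank matching the canonical ordering $del_p;del_e;del_v;upd_p;ins_v;ins_e;ins_p$, and observing that a permutation rule only fires when $op_1$ has strictly smaller rank than $op^*$, so $op_1$ stays correctly placed in front of the canonicalised tail.

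Finally, part~3 follows by induction on $|ops|$. The empty script is already canonical. Otherwise split $ops = op;ops_1$; by the induction hypothesis $ops_1$ rewrites to a canonical script $ops_1'$ of no greater length, and then part~2 applied to $op^*;ops_1'$ yields a canonical script $ops'$ mapping $G_1$ to $G_2$ with $|ops'|\le 1+|ops_1'|\le 1+|ops_1| = |ops|$. I expect the main obstacle to be part~1: the sheer number of rewrite rules makes the commutation-and-precondition bookkeeping the most tedious and error-prone step, whereas parts~2 and~3 are routine inductions once part~1 and the rank argument are in place.
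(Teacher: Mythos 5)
Your proposal is correct, and parts 1 and 3 coincide with the paper's own argument: part 1 by case analysis on the rules of Figure~\ref{fig:rewrite} (with the paper likewise singling out the cancellation cases $\insV^*/\delV$, $\insE^*/\delE$, $\insP^*/\delP$ and the fusing cases $\updP^*/\delP$ and $\insP^*/\updP$), and part 3 by induction on the length of $ops$, peeling off the head operation and invoking part 2 on the canonicalised tail. Where you genuinely diverge is part 2. The paper proves it via a termination measure: it defines the \emph{*-length} of a script (the length of the suffix starting at the marked operation, or $0$ if unmarked), observes that every rule strictly decreases this measure while never increasing total length, concludes that rewriting reaches a normal form, and then argues the normal form is canonical because the rules preserve the relative order of unmarked operations and every pair that would violate canonical form is covered by some non-catch-all rule. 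You instead run a structural induction on the canonical tail, with an explicit case split (cancellation/fusion, permutation, catch-all) and a rank argument to justify that re-prepending the permuted operation $op_1$ preserves canonicity. Both are sound; your induction makes the canonicity claim more explicit (the paper compresses it into one parenthetical remark), while the paper's measure argument avoids your one delicate step --- for the rank argument to close, you also need that every operation in the canonicalised tail has rank at least that of $op_1$, which requires noting both that the original tail was canonical (so its operations already had rank $\geq$ rank of $op_1$) and that rewriting never introduces operations of smaller rank. That is easy to supply but you leave it implicit. Incidentally, your formulation silently repairs two slips in the paper: its part 3 cites ``part 1'' where part 2 is meant, and the inequality in the statement of part 2 is written in the reversed direction; your length accounting $|ops'| \leq 1 + |ops_1'| \leq |ops|$ is the intended one.
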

\begin{proof}
  \begin{enumerate}
\item The proof is straightforward for each rule; in most cases, the
  two operations commute.  The interesting cases are:
  \begin{itemize}
  \item $\updP^*/\delP$: If the updated property is
    immediately deleted, it has the same effect as just deleting.
  \item $\insP^*/\updP$: If the inserted property is
    immediately updated, it has the same effect as just inserting the
    updated value.
\item $\insV^*/\delV$, $\insE^*/\delE$, $\insP^*/\delP$: If a node,
  edge, or property is inserted and immediately deleted,
  the two operations cancel out.
  \end{itemize}
\item Let $ops$ be a canonical edit script mapping $G_2$ to $G_3$, and
  $op^*$ a marked edit operation mapping $G_1$ to $G_2$.  Given an
  edit script with at most one marked operation, define the
  *-length of an edit script to be 0 if it contains no marked operation
  and $|op_1^*;ops_2|$ if it is of the form $ops_0;op_1^*;ops_2$. That is,
  the *-length is the length of the marked suffix of the edit script,
  or 0 if there is no mark.  All of the rules in
  Fig.~\ref{fig:rewrite} decrease the *-length of the edit script, as
  well as preserving or decreasing the length, so we can rewrite
  $op^*;ops$ to a normal form.  Moreover, clearly the rewrite rules
  preserve the order of the operations aside from the marked one, and
  in the cases where the mark is removed, the edit operation is in a
  position that is allowed in a canonical edit script (because all of
  the cases where a marked operation violates canonical form are
  covered by other rules).  Thus, after rewriting to a normal form,
  $op^*;ops$ is a canonical edit script.
  \item We proceed by induction on the length of $ops$. If it is
    empty, there is nothing to prove.  Otherwise, suppose it is of the
    form $op,ops_0$.  By induction, there must exist $ops_0' $
    equivalent to $ops_0$ with $|ops_0'| \leq |ops_0|$.  Consider the
    marked edit script $op^*;ops_0$.  By part 1,
    this normalizes to an unmarked edit script $ops'$ that is equivalent to $op;ops_0'$ with
    $|ops'| \leq |op^*;ops_0'| = |op;ops_0'| \leq |op;ops_0|= |ops|$.
  \end{enumerate}
\end{proof}

\section{Graph edit distance contest problem}
\label{sec:gedc}

Listing~\ref{fig:gedc-spec} shows the specification of the weighted
edit distance problem corresponding to the second set of parameter
settings in the Graph Edit Distance Contest~\cite{abuaisheh17prl}.  The first six
lines define the costs as constants.  Next the usual specification
that $h$ is a partial isomorphism (ignoring labels) is given.  Finally
we specify the costs associated with node and edge relabeling,
insertion, and deletion.  (It turns out to improve performance
slightly to omit the intermediate insertion, deletion and relabeling
predicates shown in Listings~\ref{fig:ged} and~\ref{fig:ged-relabel}.)

\begin{figure}[tb!]
  \lstset{frame=single,numbers=left,caption={Graph edit distance
      contest problem},label={fig:gedc-spec}}
\begin{lstlisting}
#const c_node_sub=2.
#const c_node_ins=4.
#const c_node_del=4.
#const c_edge_sub=1.
#const c_edge_ins=2.
#const c_edge_del=2.

{ h(X,Y) : n2(Y,_) } <= 1 :- n1(X,_).
{ h(X,Y) : n1(X,_) } <= 1 :- n2(Y,_).
{ h(X,Y) : e2(Y,S2,T2,_), h(S1,S2), h(T1,T2) } <= 1 :- e1(X,S1,T1,_).
{ h(X,Y) : e1(X,S1,T1,_), h(S1,S2), h(T1,T2) } <= 1 :- e2(Y,S2,T2,_).

node_cost(X,c_node_sub) :- n1(X,L1), h(X,Y), n2(Y,L2), L1 <> L2.
node_cost(Y,c_node_ins) :- n2(Y,L), not h(_,Y).
node_cost(X,c_node_del) :- n1(X,_), not h(X,_).

edge_cost(X,c_edge_sub) :- e1(X,_,_,L1), h(X,Y), e2(Y,_,_,L2), L1 <> L2.
edge_cost(Y,c_edge_ins) :- e2(Y,_,_,_), not h(_,Y).
edge_cost(X,c_edge_del) :- e1(X,_,_,_), not h(X,_).

#minimize { NC,X : node_cost(X,NC);
             EC,X : edge_cost(X,EC)}.
\end{lstlisting}
\end{figure}
\section{Approximate subgraph isomorphism}
\label{app:code}

\begin{figure}[tb!]
  \lstset{frame=single,numbers=left,caption={Approximate subgraph
      isomorphism (old)},label={fig:asgi-old}}
\begin{lstlisting}
{ h(X,Y) : n2(Y,_)} = 1 :- n1(X,_).
{ h(X,Y) : e2(Y,_,_,_)} = 1 :- e1(X,_,_,_).
:- X <> Y, h(X,Z), h(Y,Z).
:- X <> Y, h(Z,Y), h(Z,X).
:- n1(X,L), h(X,Y), not n2(Y,L).
:- e1(E1,_,_,L), h(E1,E2), not e2(E2,_,_,L).
:- e1(E1,X1,_,_), h(E1,E2), e2(E2,Y1,_,_), not h(X1,Y1).
:- e1(E1,_,X2,_), h(E1,E2), e2(E2,_,Y2,_), not h(X2,Y2).

#minimize { LC,X,K : prop_cost(X,K,LC) }.
prop_cost(X,K,0) :- p1(X,K,V),  h(X,Y), p2(Y,K,V).
prop_cost(X,K,1) :- p1(X,K,V1), h(X,Y), p2(Y,K,V2), V1 <> V2.
prop_cost(X,K,1) :- p1(X,K,V),  h(X,Y), not p2(Y,K,_).
\end{lstlisting}
  \lstset{frame=single,numbers=left,caption={Approximate subgraph
      isomorphism (new)},label={fig:asgi-new}}
\begin{lstlisting}
{h(X,Y) : n2(Y,L)} =  1 :- n1(X,L).
{h(X,Y) : n1(X,L)} <= 1 :- n2(Y,L).
{h(X,Y) : e2(Y,S2,T2,L), h(S1,S2), h(T1,T2)}  = 1 :- e1(X,S1,T1,L).
{h(X,Y) : e1(X,S1,T1,L), h(S1,S2), h(T1,T2)} <= 1 :- e2(Y,S2,T2,L).

prop_cost(X,K,0) :- p1(X,K,V),  h(X,Y), p2(Y,K,V).
prop_cost(X,K,1) :- p1(X,K,V1), h(X,Y), p2(Y,K,V2), V1 <> V2.
prop_cost(X,K,1) :- p1(X,K,V),  h(X,Y), not p2(Y,K,_).
#minimize { LC,X,K : prop_cost(X,K,LC) }.
\end{lstlisting}
\end{figure}

Two specifications defining approximate subgraph isomorphism are shown
in Listings~\ref{fig:asgi-old} and~\ref{fig:asgi-new}.  The code in
Listing~\ref{fig:asgi-old} is the same as that given by Chan et
al.~\cite{chan19mw}.  The code in Listing~\ref{fig:asgi-new} is a
variant of Listing~\ref{fig:sgi} that removes the constraint that
properties match exactly and instead associates a cost to each
property of $G_1$ not matched in $G_2$, which must be minimized.  This
is the same approach as followed in Listing~\ref{fig:asgi-old}; the
differences are in how the label preservation and edge preservation
constraints are defined.  Chan et al.~\cite{chan19mw} encoded these
constraints using several (possibly redundant) clauses, whereas in our
version, the functionality, label preservation, and edge preservation
are captured by just four constraints.

\end{document}